\newcommand{\E}{\mathrm{E}}
\newcommand{\diff}[2]{\frac{\mathit{d}^{#2}}{\mathit{dt}^{#2}} #1}
\newtheorem{theorem}{Theorem}
\newtheorem{prop}{Proposition}
\newtheorem{lem}{Lemma}
\newcommand{\dengnohat}{\ensuremath{\Psi_{r, m}(t)}}
\newcommand{\Rp}{\ensuremath{\mathit{Re}}}
\newcommand{\cv}{\ensuremath{\text{CV}}}
\newcommand{\scov}{\ensuremath{\text{SC}}}
\newcommand{\tsum}{{\textstyle\sum}}
\newcommand{\prog}[1]{\textit{#1}}
\def\spacingset#1{\renewcommand{\baselinestretch}%
  {#1}\small\normalsize} \spacingset{1}
  \title{\bf Estimating the Number of Species to Attain Sufficient Representation in a Random Sample}
  \author[1]{Chao Deng}
  \author[2]{Timothy Daley}
  \author[1]{Peter Calabrese}
  \author[1]{Jie Ren}
  \author[1,*]{Andrew D Smith
\hspace{.2cm}}
\affil[1]{Molecular and Computational Biology, University of Southern California}
\affil[*]{andrewds@usc.edu }
\affil[2]{Departments of Statistics and Bioengineering, Stanford University}
\begin{document}
\date{}
\maketitle

\bigskip

\begin{abstract}
The statistical problem of using an initial sample to estimate the
number of species in a larger sample has found important applications
in fields far removed from ecology.
Here we address the general problem of estimating the number of
species that will be represented by at least a number $r$ of observations in a
future sample.
The number $r$ indicates species with sufficient observations, which
are commonly used as a necessary condition for any robust statistical inference.
We derive a procedure to construct consistent
estimators that apply universally for a given population: once
constructed, they can be evaluated as a simple function of $r$. Our
approach is based on a relation between the number of species
represented at least $r$ times and the higher derivatives of the
expected number of species discovered per unit of time. Combining this
relation with a rational function approximation, we propose
nonparametric estimators that are accurate for both large values of
$r$ and long-range extrapolations. We further show that our estimators
retain asymptotic behaviors that are essential for applications on
large-scale datasets. We evaluate the performance of this approach by
both simulation and real data applications for inferences of the
vocabulary of Shakespeare and Dickens, the topology of a Twitter
social network, and molecular diversity in DNA sequencing data.
\end{abstract}


\noindent%
{\it Keywords:} mixture of Poisson distributions, Pad\'{e} approximant, species
accumulation curve, high-order moment, nonparametric \vfill

\newpage


\section{Introduction}\label{sec:intro}

A random sample of $N$ individuals is captured from a population after
trapping for one unit of time.
Each individual belongs to exactly one species, and the total number $L$ of
species in the population is finite but not known. Let $N_j$ be the
number of species represented by exactly $j$ individuals in this
sample, so that $N = \textstyle\sum_{j\geq 1} j N_j$.
The number of species represented $r$ or more times in the initial
sample is $S_r = \textstyle\sum_{j \geq r} N_j$.
Imagine that a second sample is obtained after trapping $t$ units of
time from the same population. The time $t > 1$ should bring to mind
a ``scaled up'' experiment. This second sample may take the form of an
expansion of the initial sample, but may also be a separate sampling
experiment as long as the second sample is representative of the
first. We are concerned with predicting the expected number
$\E[S_r(t)]$ of species represented at least $r$ times in the second
sample.

Related inference problems have been the focus of much statistical
development, with canonical applications in ecology and linguistics.
For example, \cite{zipf1935psycho,zipf1949human} was interested in the
distribution of word frequencies in random
texts. \cite{fisher1943relation} studied the relation between the
number of species and the number of individuals in a random sample;
Fisher's approach is still widely used to describe capture-recapture
experiments. When plotted as a function of $t$, the function $S_1(t)$
is called the species accumulation curve (SAC)~\citep{colwell1994estimating}.
This curve can be used to compare the diversity of populations based
on samples of differing sizes \citep{colwell2004interpolating}. More
importantly, SAC can predict the number of new
species expected in future samples. A typical question might be: given
capture profiles in a previous sample, if another sample is conducted
from the same population, how many new species would one expect to
observe in the second sample? Accurate predictions of SAC can help
scientists evaluate the future sample and allocate resources more
appropriately.

The quantity $S_1(t)$ may not be of sufficient utility when the
questions of interest involve ``common species''
\citep{preston1948commonness,pearman2007common}. In such cases the
parameter $r > 1$ in $S_r(t)$ can be naturally applied to distinguish
commonness from rarity.
In evaluating Twitter data, \cite{FM2317} focused on users with at
least $r=2$ posts, who were considered ``active'' users.
\cite{tarazona2011differential} were interested in genes represented
by more than $r=5$ sequenced reads. \cite{ng2010exome} filtered out
single nucleotide polymorphisms (SNPs) covered by fewer than $r=8$
sequenced reads. And Google Scholar uses the number of publications
cited at least $r=10$ times by others (the ``i10-index'') to measure
scholarly influence. In each of these cases a fixed $r > 1$ was used
to define those ``species'' of interest, having sufficient
multiplicity of representation in the sample.
To distinguish $S_r(t)$ from $S_1(t)$, we call $S_r(t)$ a $r$-species
accumulation curve ($r$-SAC). For the sake of convenience, we use
the terms ``SAC'' and ``$r$-SAC'' to refer to their expectations $\E[S_1(t)]$ and $\E[S_r(t)]$,
unless we explicitly say otherwise.

In this article we model frequencies of species in a sample using a
mixture of Poisson distributions
\citep{major1920an,efron1976estimating}.
In particular, individuals
representing species $i$ are assumed to be sampled according to a Poisson
process with rate $\lambda_i$ per unit of time.
The $\lambda_i$ for $i = 1, 2, \ldots, L$, can be considered as $L$
independent observations from a latent probability distribution
$G(\lambda)$.  This latent distribution describes varieties of
relative abundances among species in the population.
As a notable early example, \cite{fisher1943relation} assumed that relative
species abundance followed a gamma distribution. Although other
parametric distributions have been investigated
\citep{bhattacharya1966confluent,bulmer1974fitting,sichel1975distribution,burrell1993yes},
there are problems with using parametric distributions in practice.
There may be little information to indicate the appropriate form a
priori.
In some cases, no simple parametric form is suitable to explain
the data. In other cases, distinct parametric forms may appear to fit
the observed data well, but exhibit very different
extrapolation behaviors \citep{engen1978stochastic}.

\cite{good1956number} established a nonparametric empirical Bayes
framework that served as the foundation for much subsequent
nonparametric methodology
\citep{efron1976estimating,boneh1998estimating,chao2004nonparametric,daley2013predicting}.
\cite{good1956number} derived an estimator for the expected value of
$S_1(t)$ while avoiding direct inference of $G(\lambda)$. This
estimator takes the form of an alternating power series with
coefficients based on the count frequencies $N_j$ from the initial
sample. However, the Good-Toulmin power series usually diverges in
practice for $t>2$ \citep{good1956number}, and is consequently of
little use in modern large-scale applications.
\cite{daley2013predicting} proposed a solution to
 the divergence problem by applying
rational function approximation (RFA) to the Good-Toulmin power
series.
Development of this approach was motivated by applications
associated with DNA sequencing libraries
\citep{daley2013predicting,daley2014modeling,deng2015}, where a
``small'' sample size can be many orders of magnitude larger than
traditional ecological applications. However, the approach of
\cite{daley2013predicting} does not directly extend to $r > 1$
\citep{tim-thesis}.
Extrapolating the $r$-SAC based on an initial sample
seems more difficult when $r > 1$. In the example of
Figure~\ref{fig:curve_shape}a, the SAC appears flat after 10 units of
time, suggesting that the sample is saturated. However, for
$r = 16$, barely any species are represented at least $r$ times after
10 units of time -- leading to a very different flat curve.
Visually inspecting the shape of the $16$-SAC before 10 units
(Figure~\ref{fig:curve_shape}a) seems to provide very little
information about the shape after 20 units
(Figure~\ref{fig:curve_shape}b).

\begin{figure}[t!]
\centerline{\includegraphics{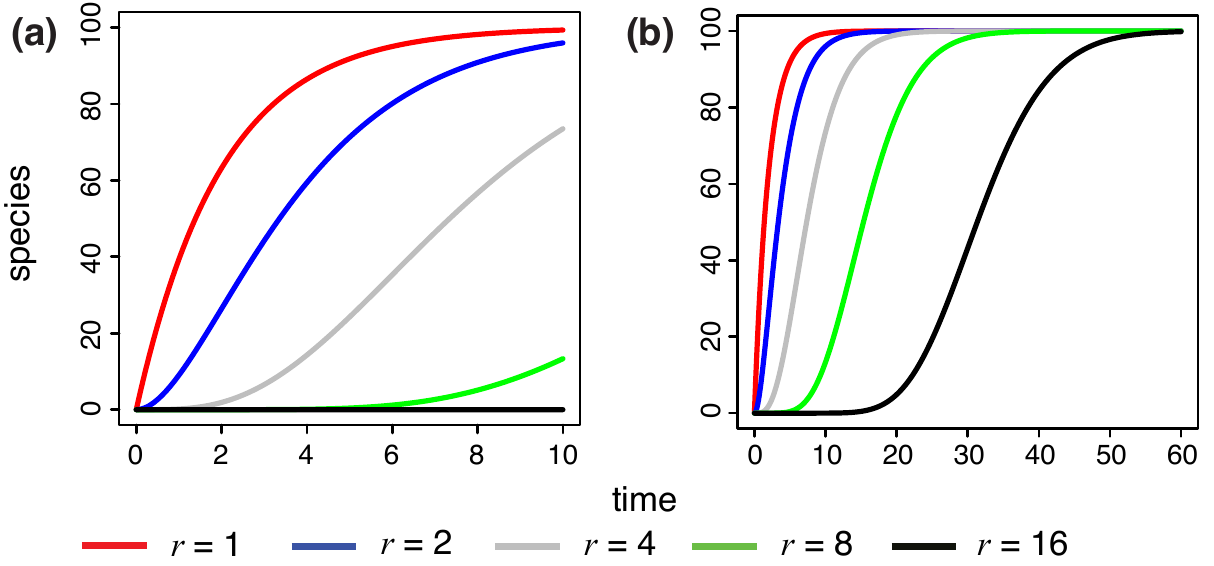}}
\caption{
Species represented by at least $r$ individuals as a
function of time $t$. 
Curves were generated from a flat model:
$\lambda_1=\lambda_2=\cdots =\lambda_{100} = 0.5$ and
population size $L = 100$. One unit of time
expects to trap 50 individuals.
The time $t$ is up to: (a) $t=10$ and
(b) $t=60$.} \label{fig:curve_shape}
\end{figure}

We describe a new approach to estimate the expected number
$\E[S_r(t)]$ of species represented at least $r$ times after trapping
for $t$ units of time, based on an initial sample from the same
population. We first derive a relation between the values we seek to
estimate and the higher-order derivatives of the average discovery
rate, defined as $\E[S_1(t)] / t$.
Then we utilize this relation to construct a universal estimator that
can apply for every value of $r$.
We show that this estimator converges in both $r$ and $t$, and is
strongly consistent as the expected size of the initial sample goes to
infinity.
Extensive simulation studies suggest that our proposed estimator
performs very well for heterogeneous populations. Applications to real
data from linguistics, social networks and DNA sequencing data confirm
the accuracy of our proposed estimator and demonstrate the value of
this new approach.

\section{Relating accumulation curves of first and higher orders}
\label{sec:s1sr}

Let $N_j$ denote the number of species
captured exactly $j$ times in an initial sample after trapping for
one unit of time, $j = 1,2,\ldots$
Clearly $N_0$ is not observable. Let $N_j(t)$ be the random
variable whose value is the number of species represented exactly $j$
times after trapping for $t$ units of time. The number $S_r(t)$ of species
represented at least $r$ times as a function of $t$ can be written as
\begin{equation}\label{eqn:defn}
S_r(t) = \sum_{j=r}^{\infty} N_j(t) = S_1(t) - \sum_{j=1}^{r-1} N_j(t).
\end{equation}
We aim to estimate the expectation of $S_r(t)$, using information
from the $N_j$.

From our Poisson mixture assumption, the expected
number of species after trapping for $t$ units of time can be expressed
\[
\E[S_1(t)] = L\int (1 - \exp(-\lambda t))dG(\lambda).
\]
Taking the $j^{\text{th}}$ derivative of $\E[S_1(t)]$, we have
\[
\diff{\E[S_1(t)]}{j} = (-1)^{j-1}L\int\lambda^j\exp(-\lambda t)dG(\lambda).
\]
Note that the expected value of $N_j(t)$ is
\[
\E[N_j(t)] ~=~ L\int \frac{(\lambda t)^j\exp(-\lambda t)}{j!}dG(\lambda) = \frac{t^j}{j!}L\int \lambda^j\exp(-\lambda t)dG(\lambda).
\]
By comparing the above expression with the $j^{\text{th}}$ derivative
of $\E[S_1(t)]$, we obtain
\begin{equation}\label{eqn:kalinin}
\E[N_j(t)] = \frac{(-1)^{j-1}t^j}{j!}~\diff{\E[S_1(t)]}{j},
\end{equation}
which has been noted previously \citep{kalinin1965functionals}.
Taking the expectation on both sides of equation \eqref{eqn:defn}, we
have
\[
\E[S_r(t)] = \E[S_1(t)] - \sum_{j=1}^{r-1} \E[N_j(t)].
\]
By replacing the $\E[N_j(t)]$ in the above equation with the
$j^{\text{th}}$ derivative of $\E[S_1(t)]$ from equation
\eqref{eqn:kalinin} we obtain a relation between $\E[S_1(t)]$ and
$\E[S_r(t)]$. This is the foundation of our estimator, and a proof can
be found in supplementary materials (Section~S1.1).

\begin{theorem}\label{thm:formula}
For any positive integer $r$,
\begin{equation}\label{eqn:main}
\E[S_r(t)] = \frac{(-1)^{r-1} t^r}{(r-1)!}
\diff{\left(\frac{\E[S_1(t)]}{t}\right)}{r-1}.
\end{equation}
\end{theorem}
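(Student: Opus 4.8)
The plan is to prove~\eqref{eqn:main} by induction on $r$, feeding the two identities already in hand --- the decomposition~\eqref{eqn:defn} in expectation, $\E[S_{r+1}(t)] = \E[S_r(t)] - \E[N_r(t)]$, and Kalinin's relation~\eqref{eqn:kalinin} --- into the product rule. For the base case $r=1$, the right-hand side of~\eqref{eqn:main} is simply $t\cdot\bigl(\E[S_1(t)]/t\bigr) = \E[S_1(t)]$, so there is nothing to check.

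For the inductive step, suppose~\eqref{eqn:main} holds for some $r\ge 1$ and write $f(t) := \E[S_1(t)]/t$, so $\E[S_1(t)] = t\,f(t)$. Since $t$ has vanishing derivatives of order at least two, the Leibniz rule gives
\[
\diff{\E[S_1(t)]}{r} \;=\; \frac{d^r}{dt^r}\bigl(t\,f(t)\bigr) \;=\; t\,f^{(r)}(t) + r\,f^{(r-1)}(t).
\]
Substituting this into~\eqref{eqn:kalinin} with $j=r$ yields
$\E[N_r(t)] = \tfrac{(-1)^{r-1}t^{r+1}}{r!}\,f^{(r)}(t) + \tfrac{(-1)^{r-1}t^{r}}{(r-1)!}\,f^{(r-1)}(t)$,
while the inductive hypothesis states $\E[S_r(t)] = \tfrac{(-1)^{r-1}t^{r}}{(r-1)!}\,f^{(r-1)}(t)$. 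Forming $\E[S_{r+1}(t)] = \E[S_r(t)] - \E[N_r(t)]$, the $f^{(r-1)}$ terms cancel and what remains is $\E[S_{r+1}(t)] = \tfrac{(-1)^{r}t^{r+1}}{r!}\,f^{(r)}(t)$, which is exactly~\eqref{eqn:main} at $r+1$. This closes the induction.

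As an alternative that avoids induction, one can expand $\bigl(t^{-1}\E[S_1(t)]\bigr)^{(r-1)}$ directly by the Leibniz rule using $(t^{-1})^{(k)} = (-1)^k k!\,t^{-k-1}$, multiply through by $(-1)^{r-1}t^r/(r-1)!$, reindex the resulting sum by $j=r-1-k$, and verify via~\eqref{eqn:kalinin} that it collapses to $\E[S_1(t)] - \sum_{j=1}^{r-1}\E[N_j(t)]$, i.e. to~\eqref{eqn:defn} in expectation. Either way the computation is elementary: the only place needing a word of care is the differentiation under the integral sign behind~\eqref{eqn:kalinin}, which is legitimate because on any compact subinterval of $(0,\infty)$ the integrands $\lambda^j e^{-\lambda t}$ are uniformly dominated (and the species sum is finite), so $\E[S_1(t)]$ is smooth and the derivative passes through. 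I do not expect a genuine obstacle here --- the main thing to get right is the bookkeeping of the alternating signs and factorials in the Leibniz expansion.
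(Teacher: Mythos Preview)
Your induction argument is correct, and the bookkeeping of signs and factorials checks out. The paper's own proof (relegated to the supplement but sketched in the text just above the theorem) proceeds by the direct route you mention as an alternative: it starts from $\E[S_r(t)] = \E[S_1(t)] - \sum_{j=1}^{r-1}\E[N_j(t)]$, substitutes Kalinin's identity~\eqref{eqn:kalinin} for each $\E[N_j(t)]$, and recognises the resulting finite sum as the Leibniz expansion of $(-1)^{r-1}t^r/(r-1)!$ times the $(r-1)$st derivative of $t^{-1}\E[S_1(t)]$. Your inductive organisation is arguably cleaner --- it processes one term at a time via $\E[S_{r+1}(t)] = \E[S_r(t)] - \E[N_r(t)]$ and the two-term Leibniz formula for $(t\,f(t))^{(r)}$, so the cancellation is transparent and no reindexing is needed --- whereas the direct computation handles the full sum at once but requires tracking the change of index $j = r-1-k$. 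Both routes rest on exactly the same ingredients, so the difference is purely one of presentation.
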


Thus we have established a direct relation between the SAC and the
$r$-SAC. The quantity $\E[S_1(t)]/t$ in equation~\eqref{eqn:main}
contains information sufficient for determining $\E[S_r(t)]$, and
allows us to derive a formula for $\E[S_r(t)]$ if we are given a
smooth expression for $\E[S_1(t)] / t$.  We call the ratio $\E[S_1(t)]
/ t$ the average discovery rate, as it reflects the average rate at
which new species are discovered per unit of time.
One clear application of Theorem~\ref{thm:formula} is to generalize
existing nonparametric estimators for the SACs and obtain estimators
for the $r$-SACs. We will first demonstrate Theorem~\ref{thm:formula}
by applying it on simple parametric forms.
In the homogenous model all $\lambda_i$ are equal with $\lambda_i =
\lambda$, for $i = 1, 2, \ldots, L$, so
\[
\E[S_1(t)] = L (1 - \exp(-\lambda t)).
\]
After introducing the above expression into equation~\eqref{eqn:main},
repeatedly differentiating the quotient reveals a familiar sum:
\[
\E[S_r(t)] = L \left(1 - \sum_{i=0}^{r-1} \frac{\lambda^i\exp(-\lambda)}{i!}\right),
\]
In the negative binomial population model $\lambda_i \sim
\mathrm{Gamma}(\alpha, \beta)$ with $\alpha$ and $\beta$ positive,
\[
\E[S_1(t)] = L \left(1 - (1 + \beta t)^{-\alpha}\right).
\]
Applying Theorem~\ref{thm:formula} and the general Leibniz rule reveals the negative binomial coefficients:
\begin{align*}
\E[S_r(t)] = L\left(1 - \sum_{i=0}^{r-1}\frac{\Gamma(i + \alpha)}{\Gamma(i + 1)\Gamma(\alpha)}\left(\frac{\beta t}{1+ \beta t}\right)^i\left(\frac{1}{ 1 + \beta t}\right)^{\alpha}\right).
\end{align*}

\section{A new nonparametric estimator}
\label{sec:new_estimator}

Here we leverage the technique of Pad\'{e} approximants to build a
nonparametric estimator for the $r$-SAC. A Pad\'{e} approximant is a
rational function with a Taylor expansion that agrees with the
power series of the function it approximates up to a
specified degree
\citep{baker1996pade}. In this sense, Pad\'{e} approximants are
rational functions that optimally approximate a power series. This
method was successfully applied to construct the estimator of the SAC,
using Pad\'{e} approximants to the Good-Toulmin power series
\citep{deng2015}. Pad\'{e} approximants are effective
because they converge in practice when the Good-Toulmin power series
does not, yet within the applicable range of Good-Toulmin power series
($t < 2$), the two functions remain close. We apply the same strategy
beginning with the average discovery rate. This leads to an expression
that simplifies the formula of Theorem~\ref{thm:formula}, yielding a
new and practical nonparametric estimator for the $r$-SAC.

Our first step is to obtain a power series representation for the
average discovery rate $\E[S_1(t)]/t$ in terms of $S_i$. A proof of the following result
can be found in the supplement (Section~S1.2).

\begin{lem}\label{lem:discovery_rate_ps}
If $0 < t < 2$, then
\begin{equation}\label{eqn:expdisc}
\frac{\E[S_1(t)]}{t} = \sum_{i=0}^\infty (-1)^{i}(t - 1)^{i}\E[S_{i+1}].
\end{equation}
\end{lem}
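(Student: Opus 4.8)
\emph{Proof sketch.} The plan is to read off the Taylor expansion of the average discovery rate $f(t):=\E[S_1(t)]/t$ about the point $t=1$. Applying Theorem~\ref{thm:formula} at a general $t>0$ and solving for the derivative gives $f^{(i)}(t)=(-1)^{i}i!\,t^{-(i+1)}\E[S_{i+1}(t)]$ for every integer $i\ge0$; since the initial sample corresponds to $t=1$ (so that $S_{i+1}(1)=S_{i+1}$), evaluation at $t=1$ yields $f^{(i)}(1)=(-1)^{i}i!\,\E[S_{i+1}]$. Hence the formal Taylor series of $f$ at $t=1$ is exactly $\sum_{i\ge0}(-1)^{i}(t-1)^{i}\E[S_{i+1}]$, the right-hand side of \eqref{eqn:expdisc}, and the entire content of the lemma is the claim that this series converges to $f(t)$ for $0<t<2$.

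Rather than estimate a Taylor remainder, I would prove this by a direct termwise expansion. Starting from $\E[S_1(t)]=L\int(1-\exp(-\lambda t))\,dG(\lambda)$, substitute $\exp(-\lambda t)=\exp(-\lambda)\sum_{k\ge0}(-\lambda(t-1))^{k}/k!$ and integrate term by term. For $|t-1|<1$ the interchange is legitimate by Fubini--Tonelli, since the identity $L\int\lambda^{k}\exp(-\lambda)\,dG(\lambda)=k!\,\E[N_k]$ (the formula for $\E[N_k(t)]$ in Section~\ref{sec:s1sr} at $t=1$) and the bound $\E[N_k]\le L$ give
\[
\sum_{k\ge0}\frac{|t-1|^{k}}{k!}\int\lambda^{k}\exp(-\lambda)\,dG(\lambda)=\frac1L\sum_{k\ge0}|t-1|^{k}\,\E[N_k]\le\sum_{k\ge0}|t-1|^{k}<\infty .
\]
Combining the termwise expansion with $\sum_{j\ge0}\E[N_j]=L$ to absorb the $k=0$ term, one obtains the Good--Toulmin--type identity
\[
\E[S_1(t)]=\E[S_1]-\sum_{k\ge1}(-1)^{k}(t-1)^{k}\,\E[N_k],\qquad 0<t<2.
\]

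To conclude, put $A(t):=\sum_{i\ge0}(-1)^{i}(t-1)^{i}\E[S_{i+1}]$, which converges absolutely for $|t-1|<1$ because $0\le\E[S_{i+1}]\le L$. Writing $t\,A(t)=A(t)+(t-1)A(t)$, reindexing the second sum, and using $S_i=S_{i+1}+N_i$ to telescope the combined series gives $t\,A(t)=\E[S_1]-\sum_{i\ge1}(-1)^{i}(t-1)^{i}\E[N_i]$, which is exactly $\E[S_1(t)]$ by the identity above; dividing by $t>0$ gives \eqref{eqn:expdisc}. The only genuine subtlety is justifying these termwise operations, and it is precisely here that the hypothesis $0<t<2$, i.e.\ $|t-1|<1$, is used; I expect this to be the main obstacle only as bookkeeping. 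The tempting alternative of bounding the Lagrange remainder of $f$ at $t=1$ directly from Theorem~\ref{thm:formula} produces a term of the form $\E[S_{n+1}(\xi_n)]\,|t-1|^{n}/\xi_n^{\,n+1}$ with $\xi_n$ between $1$ and $t$, which decays geometrically when $t\in(1,2)$ but requires Poisson tail estimates to control when $t$ is near $0$, so the direct expansion above is the cleaner route.
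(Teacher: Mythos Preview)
Your argument is correct. The paper relegates its own proof of this lemma to the supplementary materials (Section~S1.2), which are not included here, so a line-by-line comparison is not possible; nonetheless your route is a natural one and is consistent with how the paper organizes the surrounding material. You first recover the Good--Toulmin identity $\E[S_1(t)]=\E[S_1]-\sum_{k\ge1}(-1)^{k}(t-1)^{k}\E[N_k]$ by writing $e^{-\lambda t}=e^{-\lambda}e^{-\lambda(t-1)}$ and integrating termwise (the Fubini bound via $L\int\lambda^{k}e^{-\lambda}\,dG=k!\,\E[N_k]\le k!\,L$ is exactly right), and then verify algebraically that $t\sum_{i\ge0}(-1)^{i}(t-1)^{i}\E[S_{i+1}]$ telescopes to the same expression using $S_i=S_{i+1}+N_i$. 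The opening remark that Theorem~\ref{thm:formula} identifies the Taylor coefficients $f^{(i)}(1)=(-1)^{i}i!\,\E[S_{i+1}]$ is the right way to \emph{guess} the series, and your decision to prove equality by direct expansion rather than by controlling the Lagrange remainder is well judged: as you note, the remainder bound $\E[S_{n+1}(\xi_n)]\,|t-1|^{n}/\xi_n^{\,n+1}$ is not obviously small for $t$ near $0$ without additional work, whereas the termwise route uses the hypothesis $|t-1|<1$ in a transparent way.
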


Replacing expectations with the corresponding observations, we obtain
an unbiased power series estimator of the average discovery rate:
\begin{equation}\label{eqn:discRateEst}
\phi(t) = \sum_{i=0}^\infty (-1)^{i}(t - 1)^{i}S_{i+1}.
\end{equation}
This power series estimator $\phi(t)$ serves as a bridge between the
observed data $S_i$ and the Pad\'{e} approximant for $\E[S_1(t)] / t$,
which cannot be obtained directly.  The Pad\'{e} approximant for
$\E[S_1(t)] / t$ is defined by its behavior around $t = 1$, which is
the region where $\E[S_1(t)] / t$ is close to $\phi(t)$. Note that in
principle we could directly substitute the estimated power series
$\phi(t)$ for the average discovery rate to obtain an unbiased
power-series estimator for $\E[S_r(t)]$. Unfortunately, this estimator
practically diverges for $t > 2$, due to the small radius of
convergence of the power series and the use of the truncated power
series to approximate it (see discussion in supplemental
Section~S4).

Although Pad\'{e} approximants to a given function can have any
combination of degrees for the numerator and denominator polynomials,
we consider only the subset for which the difference in degree of
the numerator and denominator is 1. This choice permits these rational
functions to mimic the long-term behavior of the average discovery
rate, which should approach $L/t$ for large $t$.

Let $P_{m-1}(t)/Q_m(t)$ denote the Pad\'{e} approximant to power
series $\phi(t)$ with numerator degree $m - 1$ and denominator degree
$m$. According to the formal determinant
representation \citep{baker1996pade},
\begin{multline}\label{eqn:matrixform}
\footnotesize
\frac{P_{m-1}(t)}{Q_{m}(t)} = \frac{a_0 + a_1(t - 1) + \cdots + a_{m-1}(t-1)^{m-1}}{b_0 + b_1(t-1) + \cdots + b_m(t-1)^m} = \\[0.5em]
{\footnotesize
\frac{
\begin{vmatrix}
    (-1)^0 S_1 & (-1)^1 S_2 & \dots &  (-1)^{m-1} S_{m} & (-1)^m S_{m+1}\\
    (-1)^1 S_2 & (-1)^2 S_3 & \dots & (-1)^{m} S_{m+1} & (-1)^{m+1} S_{m+2} \\
    \vdots & \vdots & \ddots & \vdots & \vdots \\
    (-1)^{m-1} S_{m} & (-1)^{m} S_{m+1} & \dots &  (-1)^{2m-2} S_{2m -1} & (-1)^{2m-1} S_{2m} \\
    0 & \!\!\!\!(-1)^0 S_1(t-1)^{m-1} \!\!\!& \dots & \!\!\!\sum_{i=0}^{m-2} (-1)^i S_{i+1} (t-1)^{i+1} & \sum_{i=0}^{m-1} (-1)^i S_{i+1} (t-1)^i
\end{vmatrix}}{
\begin{vmatrix}
    (-1)^0 S_1 & (-1)^1 S_2 & \dots &  (-1)^{m-1} S_{m} & (-1)^m S_{m+1}\\
    (-1)^1 S_2 & (-1)^2 S_3 & \dots & (-1)^{m} S_{m+1} & (-1)^{m+1} S_{m+2} \\
    \vdots & \vdots & \ddots & \vdots &  \vdots \\
    (-1)^{m-1} S_{m} & (-1)^{m} S_{m+1} & \dots &  (-1)^{2m-2} S_{2m -1} & (-1)^{2m-1} S_{2m} \\
    (t - 1)^m & (t-1)^{m-1} & \dots & (t - 1) & 1
\end{vmatrix}}.
}
\end{multline}
The above representation allows us to reason algebraically about the
existence of the desired Pad\'{e} approximant to $\phi(t)$ for a given initial
sample.
Define the Hankel determinants
\begin{equation}\label{eqn:delta}
\Delta_{i, j} =
\begin{vmatrix}
  S_{i - j + 2} & S_{i - j + 3} & \dots  & S_{i+1} \\
  S_{i - j + 3} & S_{i - j + 4} & \dots  & S_{i+2} \\
  \vdots & \vdots &  \ddots & \vdots \\
  S_{i+1} & S_{i+2} & \dots  & S_{i+j}
\end{vmatrix}_{j\times j},
\end{equation}
with $S_k = 0$ for $k<1$.  A proof of the next lemma is given in
the supplement (Section~S1.2).

\begin{lem}\label{lem:existpade}
If the determinants $\Delta_{m-1, m}$ and $\Delta_{m, m}$ are nonzero,
there exist real numbers $a_i$ and $b_j$ for $i = 0, \ldots, m - 1$
and $j = 1, 2, \ldots, m$, with $b_m \neq 0$, such that the rational function
\[
\frac{P_{m-1}(t)}{Q_m(t)} = \frac{a_0 + a_1(t - 1) + \cdots + a_{m-1}(t-1)^{m-1}}{1 + b_1(t-1) + \cdots + b_m(t-1)^m}
\]
satisfies
\begin{equation}\label{eqn:padecondition}
\phi(t) - \frac{P_{m-1}(t)}{Q_m(t)} = O\left((t - 1)^{2m}\right),
\end{equation}
and all $a_i$ and $b_j$ are uniquely determined by $S_1,S_2,\ldots,S_{2m}$.
\end{lem}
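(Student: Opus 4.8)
The plan is to prove Lemma~\ref{lem:existpade} by translating the Pad\'e approximation condition \eqref{eqn:padecondition} into a linear system in the unknowns $a_i$ and $b_j$, and then showing that the hypotheses $\Delta_{m-1,m}\neq 0$ and $\Delta_{m,m}\neq 0$ guarantee this system has a unique solution with $b_m\neq 0$. First I would recenter everything at $t=1$: write $u = t-1$, let $c_i = (-1)^i S_{i+1}$ be the coefficients of $\phi$ in powers of $u$ (so $\phi(t) = \sum_{i\ge 0} c_i u^i$, valid for $0<t<2$ by Lemma~\ref{lem:discovery_rate_ps}), and seek $P_{m-1}/Q_m = (a_0 + a_1 u + \cdots + a_{m-1}u^{m-1})/(1 + b_1 u + \cdots + b_m u^m)$. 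Clearing the denominator, the requirement $\phi\cdot Q_m - P_{m-1} = O(u^{2m})$ becomes: for each power $u^k$ with $0\le k\le 2m-1$, the coefficient vanishes. The top $m$ equations ($k=0,\dots,m-1$) express each $a_k$ as a linear combination of $c_0,\dots,c_k$ and $b_1,\dots,b_k$, so they simply \emph{define} the $a_i$ once the $b_j$ are fixed. The bottom $m$ equations ($k=m,\dots,2m-1$) involve only the $b_j$ and the coefficients $c_i$, giving a square $m\times m$ linear system $\sum_{j=1}^{m} c_{k-j} b_j = -c_k$ for $k = m,\dots,2m-1$ (with $c_i=0$ for $i<0$).

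Next I would identify the coefficient matrix of this bottom system as (up to a reordering of columns and an overall sign coming from the $(-1)^i$ factors in $c_i$) the Hankel matrix whose determinant is $\Delta_{m,m}$ of \eqref{eqn:delta}: its entries are $c_{k-j} = \pm S_{k-j+1}$, which as $(k,j)$ range over $\{m,\dots,2m-1\}\times\{1,\dots,m\}$ sweep out exactly the $S$-Hankel pattern appearing in $\Delta_{m,m}$. Since $\Delta_{m,m}\neq 0$ by hypothesis, the matrix is invertible, so the $b_j$ are uniquely determined, and by Cramer's rule each $b_j$ is a ratio of polynomials in $S_1,\dots,S_{2m}$; in particular $b_m$ is (up to sign) $\Delta_{m-1,m}/\Delta_{m,m}$ — this is where the second hypothesis $\Delta_{m-1,m}\neq 0$ enters, forcing $b_m\neq 0$ and hence ensuring $Q_m$ genuinely has degree $m$ (so that the degree-difference-one property the paper wants actually holds). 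Having solved for the $b_j$, the top $m$ equations then produce the $a_i$ explicitly, again as polynomial expressions in $S_1,\dots,S_{2m}$, and uniqueness of the whole tuple $(a_i,b_j)$ follows from uniqueness at each stage. Finally I would note that this derivation is consistent with — indeed, is the content behind — the determinant formula \eqref{eqn:matrixform}: expanding those determinants by the last row recovers precisely the linear combinations just described, so one may alternatively simply verify \eqref{eqn:matrixform} satisfies \eqref{eqn:padecondition} and read off that the denominators are $\Delta_{m,m}$.

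The main obstacle I anticipate is purely bookkeeping rather than conceptual: carefully matching the index ranges and the alternating signs $(-1)^i$ so that the bottom-block coefficient matrix is seen to be \emph{exactly} the Hankel matrix defining $\Delta_{m,m}$ (and its minor defining $\Delta_{m-1,m}$), including getting the column ordering and the sign of the determinant right. One must be attentive that the convention $S_k=0$ for $k<1$ matches the convention $c_i=0$ for $i<0$ used when writing the linear system, and that the claimed uniqueness is over $\mathbb{R}$, which is immediate once the relevant determinant is nonzero. None of these steps requires heavy machinery — they are the standard existence/uniqueness argument for Pad\'e approximants specialized to the expansion point $t=1$ — so the proof should be short, with the Hankel-determinant identification being the one place to write out carefully.
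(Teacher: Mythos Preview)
Your approach is correct and is the standard linear-algebra proof of Pad\'e existence, which is essentially what the paper's supplement does (the determinant formula~\eqref{eqn:matrixform} is precisely the Cramer-rule solution of the system you set up). However, you have the roles of the two Hankel determinants reversed, and since you yourself flag the sign/index bookkeeping as the main hazard, this is worth correcting before you write it up.

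Concretely: the coefficient matrix of your bottom system $\sum_{j=1}^m c_{k-j}b_j=-c_k$, $k=m,\dots,2m-1$, involves only $c_0,\dots,c_{2m-2}$, i.e.\ $S_1,\dots,S_{2m-1}$; after the column reversal and the diagonal sign extraction you describe, its determinant is $\pm\Delta_{m-1,m}$, not $\Delta_{m,m}$. It is therefore the hypothesis $\Delta_{m-1,m}\neq 0$ that makes the system uniquely solvable. Cramer's rule then gives (up to sign)
\[
b_m \;=\; \frac{\Delta_{m,m}}{\Delta_{m-1,m}},
\]
so it is $\Delta_{m,m}\neq 0$ that forces $b_m\neq 0$ and hence $\deg Q_m=m$. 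A quick sanity check at $m=2$ confirms this: the coefficient matrix is $\bigl(\begin{smallmatrix}c_1&c_0\\c_2&c_1\end{smallmatrix}\bigr)$ with determinant $S_2^2-S_1S_3=-\Delta_{1,2}$, while the Cramer numerator for $b_2$ is $c_2^2-c_1c_3=S_3^2-S_2S_4=-\Delta_{2,2}$. Once you swap the attributions, the argument goes through exactly as you outline.
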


In what follows we assume that denominators of all rational functions
of interest have simple roots. In practice we do not encounter
$Q_m(t)$ with repeated roots, and in the supplement we show how this
assumption can be removed (Section~S1.2).

\begin{theorem}\label{thm:exist}
Let $m$ be a positive integer. If both determinants $\Delta_{m-1,m}$
and $\Delta_{m,m}$ are nonzero, then there exist complex numbers
$c_i$ and $x_i$, uniquely determined by $S_1$, $\ldots$, $S_{2m}$, such that for all $1 \leq r \leq 2m$,
\begin{equation}\label{eqn:estimator}
\dengnohat{} = \sum_{i=1}^m c_i\left(\frac{t}{t - x_i} \right)^r
\end{equation}
satisfies $\Psi_{r, m}(1) = S_r$.
\end{theorem}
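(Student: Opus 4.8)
The plan is to build $\Psi_{r,m}$ by applying the formula of Theorem~\ref{thm:formula} to the Pad\'e approximant of the average discovery rate, and then to read off the value at $t=1$ directly from the Pad\'e matching conditions. First I would invoke Lemma~\ref{lem:existpade}: since $\Delta_{m-1,m}$ and $\Delta_{m,m}$ are nonzero, the approximant $R(t):=P_{m-1}(t)/Q_m(t)$ exists, is uniquely determined by $S_1,\dots,S_{2m}$, has $\deg Q_m=m$ because $b_m\neq0$, and is normalized so that $Q_m(1)=1\neq0$. Under the standing simple-roots assumption, let $x_1,\dots,x_m$ be the $m$ distinct roots of $Q_m$; each $x_i\neq1$ since $Q_m(1)\neq0$. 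Because $\deg P_{m-1}<\deg Q_m$, the partial fraction expansion carries no polynomial part:
\[
R(t)=\sum_{i=1}^m\frac{c_i}{t-x_i},\qquad c_i=\frac{P_{m-1}(x_i)}{Q_m'(x_i)}.
\]
The multiset $\{(c_i,x_i)\}_{i=1}^m$ is then a function of $S_1,\dots,S_{2m}$ and is unique, which gives the existence/uniqueness part of the statement.

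Next I would substitute $R(t)$ for $\E[S_1(t)]/t$ in the identity of Theorem~\ref{thm:formula}, i.e. set $\Psi_{r,m}(t)=\tfrac{(-1)^{r-1}t^r}{(r-1)!}\,\tfrac{d^{r-1}}{dt^{r-1}}R(t)$. Using $\tfrac{d^{r-1}}{dt^{r-1}}(t-x_i)^{-1}=(-1)^{r-1}(r-1)!\,(t-x_i)^{-r}$ and summing over $i$, the factors $(-1)^{r-1}$ and $(r-1)!$ cancel and one obtains exactly the closed form $\Psi_{r,m}(t)=\sum_{i=1}^m c_i\,(t/(t-x_i))^r$ of equation~\eqref{eqn:estimator}. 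For the initial condition, note that the Pad\'e condition \eqref{eqn:padecondition}, $\phi(t)-R(t)=O((t-1)^{2m})$, together with analyticity of $R$ at $t=1$, forces the Taylor coefficients of $R$ at $t=1$ to match those of $\phi(t)=\sum_{i\ge0}(-1)^i(t-1)^iS_{i+1}$ through order $2m-1$; that is, $R^{(k)}(1)=(-1)^k k!\,S_{k+1}$ for $0\le k\le 2m-1$. Taking $k=r-1$, which is permissible precisely because $r\le 2m$, evaluation at $t=1$ gives
\[
\Psi_{r,m}(1)=\frac{(-1)^{r-1}}{(r-1)!}\,R^{(r-1)}(1)=\frac{(-1)^{r-1}}{(r-1)!}\cdot(-1)^{r-1}(r-1)!\,S_r=S_r.
\]

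The computations are routine; the parts needing care are structural. One must use $b_m\neq0$ to know $\deg Q_m=m$ exactly, so the partial fraction sum has exactly $m$ simple-pole terms and no polynomial remainder; one must use $Q_m(1)\neq0$ so that $R$ is analytic at $t=1$ (hence the Pad\'e order condition really translates into matching derivatives) and so that $x_i\neq1$; and one must track that matching $2m$ Taylor coefficients supplies derivative information only through order $2m-1$, which is exactly the constraint $1\le r\le 2m$ in the statement. The simple-roots hypothesis is what keeps the partial fraction expansion first order throughout; the general case with repeated roots is deferred, as the paper notes, to the supplement.
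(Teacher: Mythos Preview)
Your proof is correct and follows essentially the same route as the paper: invoke Lemma~\ref{lem:existpade} to obtain the Pad\'e approximant, define $\Psi_{r,m}$ by plugging it into the formula of Theorem~\ref{thm:formula}, use the partial fraction decomposition (under the simple-roots assumption) to reach the closed form~\eqref{eqn:estimator}, and read off $\Psi_{r,m}(1)=S_r$ from the $2m$ matched Taylor coefficients at $t=1$. The only differences are cosmetic---you reorder the closed-form derivation before the evaluation at $t=1$ and make explicit the structural points ($Q_m(1)\neq0$, $x_i\neq1$, no polynomial part) that the paper leaves implicit.
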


\begin{proof}
  The assumptions that $\Delta_{m-1, m} \neq 0$ and $\Delta_{m, m} \neq 0$ imply that
  the Pad\'{e} approximant $P_{m-1}(t) / Q_m(t)$ exists in
  correspondence with $\phi(t)$. Substituting $P_{m-1}(t)/Q_m(t)$ in
  place of the average discovery rate in equation~\eqref{eqn:main}, we
  define
  \begin{equation}\label{eqn:rfaSub}
    \dengnohat{} =
    \frac{(-1)^{r-1} t^r}{(r-1)!}
    \diff{\left(\frac{P_{m-1}(t)}{Q_m(t)}\right)}{r-1}.
  \end{equation}
  By the definition of the Pad\'{e} approximant, we have
  \begin{equation}\label{rfa:error}
    \phi(t) - \frac{P_{m-1}(t)}{Q_m(t)} = O\left((t - 1)^{2m}\right).
  \end{equation}
  Taking derivatives of $\phi(t)$ at $t = 1$, for $j = 0, 1, \ldots, 2m - 1$,
  \[
  \diff{\left(\frac{P_{m-1}(1)}{Q_m(1)}\right)}{j} =  \diff{\phi(1)}{j}= (-1)^j j! S_{j+1}.
  \]
  Therefore, for any $r=1,2,\ldots,2m$,
  \begin{equation}
    \Psi_{r, m}(1) = \frac{(-1)^{r-1}}{(r-1)!}
    \diff{\left(\frac{P_{m-1}(1)}{Q_m(1)}\right)}{r-1} = \frac{(-1)^{r-1}}{(r-1)!} \Big((-1)^{r-1} (r-1)!S_r\Big) = S_r.
  \end{equation}

  Now we show that \dengnohat{} defined in \eqref{eqn:rfaSub} can be
  expressed in the desired form \eqref{eqn:estimator}. Let
  $x_1,\ldots,x_m$ be the distinct roots of $Q_m(t)=0$. We can write
  $P_{m-1}(t) / Q_m(t)$ as
  \begin{equation}\label{eqn:partialfraction}
    \frac{P_{m-1}(t)}{Q_m(t)} = \sum_{i=1}^m \frac{c_i}{t - x_i},
  \end{equation}
  where $c_i$ are coefficients of the partial fraction
  decomposition. The required derivatives take a convenient form:
  \begin{equation*}
    \diff{\left(\frac{c_i}{t - x_i}\right)}{r-1}
    = ~
    (-1)^{r-1} (r-1)! \left(\frac{c_i}{(t - x_i)^r}\right).
  \end{equation*}
  By substituting these derivatives into \eqref{eqn:rfaSub} we arrive
  at
  \begin{equation}\label{eqn:rfaEst}
    \dengnohat{} = \frac{(-1)^{r-1} t^r}{(r-1)!}
    \diff{\left( \sum_{i=1}^m \frac{c_i}{t - x_i} \right)}{r-1} = \sum_{i=1}^m c_i\left( \frac{t}{t - x_i} \right)^r.
  \end{equation}

  Finally, the uniqueness of the coefficients $c_i$ and the roots $x_i$
  follows from the uniqueness of the Pad\'{e} approximant $P_{m-1}(t) / Q_m(t)$,
  which is a function of $S_i$, $i=1, \ldots, 2m$.
\end{proof}

The function \dengnohat{} in Theorem~\ref{thm:exist} is a
nonparametric estimator for the $r$-SAC. Of note, the coefficients
$c_i$ and poles $x_i$ are independent of $r$: once determined, they
can be used to directly evaluate \dengnohat{} for any $r$.  The
estimator \dengnohat{} has some favorable properties, summarized in
the following proposition, with proofs given in
Section~S1.3.

\begin{prop}\label{prop:estimatorproperty}
(i) The estimator \dengnohat{} is unbiased for $\E[S_r(t)]$ at $t = 1$ for $r \leq 2m$.\\
(ii) The estimator \dengnohat{} converges as $t$ approaches infinity. In particular,
\[
\lim_{t\to\infty} \dengnohat = \frac{\Delta_{m-1, m+1}}{\Delta_{m, m}}.
\]
(iii) The estimator \dengnohat{} is strongly consistent as the
initial sample size $N$ goes to infinity.
\end{prop}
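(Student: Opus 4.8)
The plan is to handle the three parts in order of increasing difficulty, reusing the structure already built. Part~(i) is essentially a corollary of Theorem~\ref{thm:exist}: that theorem gives the pointwise identity $\Psi_{r,m}(1)=S_r$ for every realization of the initial sample (on the event, assumed throughout, that $\Delta_{m-1,m}$ and $\Delta_{m,m}$ are nonzero) and every $r\le 2m$. Since $S_r=S_r(1)$, taking expectations yields $\E[\Psi_{r,m}(1)]=\E[S_r]=\E[S_r(1)]$, which is exactly unbiasedness at $t=1$; nothing further is required.

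For part~(ii) I would start from the closed form \eqref{eqn:estimator}. Each factor $t/(t-x_i)\to 1$ as $t\to\infty$, since the poles $x_i$ are finite constants for the given sample, so the finite sum converges and $\lim_{t\to\infty}\dengnohat{}=\sum_{i=1}^m c_i$, a limit independent of $r$. It then remains to identify $\sum_i c_i$ with $\Delta_{m-1,m+1}/\Delta_{m,m}$. Multiplying the partial-fraction identity \eqref{eqn:partialfraction} by $t$ and letting $t\to\infty$ gives $\sum_i c_i=\lim_{t\to\infty}t\,P_{m-1}(t)/Q_m(t)$; since $\deg P_{m-1}=m-1$ and $\deg Q_m=m$, this limit is the ratio of the leading coefficient of $P_{m-1}$ to that of $Q_m$ (the same whether one expands in powers of $t$ or of $t-1$). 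I would read these two leading coefficients off the determinant representation \eqref{eqn:matrixform}: extract the highest power of $(t-1)$ from the last row of the numerator and denominator determinants, factor the alternating signs out of the rows and columns, and reindex, using the convention $S_k=0$ for $k<1$ to fill the leading zero in the last numerator row; the numerator's leading coefficient then matches $\pm\Delta_{m-1,m+1}$ and the denominator's $\pm\Delta_{m,m}$ with the same sign, and dividing gives the claim. The only real work here is this determinant bookkeeping, which is routine but sign-sensitive.

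Part~(iii) parallels the strong consistency established for the SAC estimator (the case $r=1$), but must additionally survive the $(r-1)$-fold differentiation in \eqref{eqn:rfaSub}. I would organize it in three steps. First, a strong law: as the expected size of the initial sample grows, the finitely many inputs $S_1,\dots,S_{2m}$ that determine \dengnohat{} converge almost surely (in the appropriate scaling) to their population counterparts, and in the limit $\Delta_{m-1,m}$ and $\Delta_{m,m}$ remain nonzero, so the Pad\'e approximant persists with simple poles. Second, continuity: by Lemma~\ref{lem:existpade} the coefficients of $P_{m-1}$ and $Q_m$, hence those of every derivative $\frac{d^{k}}{dt^{k}}\bigl(P_{m-1}(t)/Q_m(t)\bigr)$, are rational functions of $S_1,\dots,S_{2m}$ that are continuous wherever $\Delta_{m-1,m}$ and $\Delta_{m,m}$ do not vanish; composing with the fixed differential operator $\frac{(-1)^{r-1}t^r}{(r-1)!}\frac{d^{r-1}}{dt^{r-1}}$ of \eqref{eqn:rfaSub} and invoking the continuous mapping theorem, \dengnohat{} converges almost surely, for each fixed $t$, to the same functional of the limiting Pad\'e approximant. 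Third, identify the limit: in this regime the limiting Pad\'e approximant agrees with the population average discovery rate $\E[S_1(t)]/t$ on the relevant range of $t$, so Theorem~\ref{thm:formula} identifies the almost-sure limit of \dengnohat{} with $\frac{(-1)^{r-1}t^r}{(r-1)!}\frac{d^{r-1}}{dt^{r-1}}\bigl(\E[S_1(t)]/t\bigr)=\E[S_r(t)]$.

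The main obstacle is exactly the passage from convergence of the rational function $P_{m-1}/Q_m$ to convergence of its $(r-1)$-st derivative -- that is, legitimizing the interchange of the limit $N\to\infty$ with repeated differentiation -- together with ruling out degenerate configurations (vanishing Hankel determinants, coalescing poles) in the limit. Because $P_{m-1}/Q_m$ is rational with only finitely many coefficients, all of which converge, differentiation stays within the same finitely parametrized family, so continuity settles the matter once the non-degeneracy of the first step is in hand. This is the piece that goes beyond merely citing the $r=1$ result, and it is where I would concentrate the care.
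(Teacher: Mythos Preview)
Your arguments for parts~(i) and~(ii) are sound and follow the natural route the paper takes; the determinant bookkeeping in~(ii) is indeed routine once the alternating signs are factored out of the rows and columns of~\eqref{eqn:matrixform}.

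Part~(iii), however, has a real gap, and it sits in your step~3. You assert that ``the limiting Pad\'e approximant agrees with the population average discovery rate $\E[S_1(t)]/t$ on the relevant range of~$t$.'' For fixed~$m$ this is generally false: the $[m{-}1/m]$ Pad\'e approximant to a non-rational function matches it only through order $2m-1$ at $t=1$ and differs elsewhere. If your continuity argument went through as written, $\Psi_{r,m}(t)$ would converge to the $r$-SAC built from the \emph{population Pad\'e approximant}, not to $\E[S_r(t)]$ itself, and that is not consistency.

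The asymptotic regime the paper has in mind --- signalled by the Remark immediately following the Proposition --- is the one in which the population is held fixed while the expected initial sample size grows, so that $S_j\to L$ almost surely for every~$j$. In this regime your step~1 also breaks: all of $S_1,\dots,S_{2m}$ collapse to the common value~$L$, both $\Delta_{m-1,m}$ and $\Delta_{m,m}$ tend to zero, the determinant representation~\eqref{eqn:matrixform} becomes a $0/0$ form, and the continuous-mapping argument at the level of Pad\'e coefficients does not apply directly. The correct path is the one the Remark points to: show that the Pad\'e approximant nevertheless has the well-defined limit $L/t$ as the inputs approach $(L,\dots,L)$ (e.g.\ by row-reducing~\eqref{eqn:matrixform} before taking the limit, or via the continued-fraction form). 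Then~\eqref{eqn:rfaSub} gives $\Psi_{r,m}(t)\to L$, and since $\E[S_r(t)]\to L$ in the same regime, consistency follows. You correctly isolate the passage of the limit through the $(r{-}1)$-fold differentiation as delicate, but the identification of the limit around it needs to be reworked along these lines.
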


\noindent Remark. Both determinants $\Delta_{m-1,m}$ and $\Delta_{m,m}$ become 0 when
$S_j = L$ for $j \leq 2m$ and $m > 1$, so the determinant
representation of the Pad\'{e} approximant \eqref{eqn:matrixform} is
ill-defined in such cases. However, the Pad\'{e} approximant itself
remains valid and reduces to $L / t$ for $t > 0$
(see Section~S1.3).


\section{An algorithm for estimator construction}
\label{sec:algorithm}

\subsection{Conditions for well-behaved rational functions}

The choice of $m$ controls the degree of both the numerator and the
denominator in the Pad\'{e} approximant, and determines the amount of
information from the initial sample that is used by \dengnohat{}. In
principle $m$ should be selected sufficiently large so that the
estimator \dengnohat{} can explain the complexity of the latent
distribution $G(\lambda)$.  However, a larger value of $m$ leads to
more poles in the estimator \dengnohat{} and makes instability more
likely.  In practice, the stability of the estimators depends on the
locations of poles. For example, if any pole $x_i$ resides on the
positive real axis, then \dengnohat{} is unbounded in the neighborhood
of $x_i$ and becomes ill-defined at $t = x_i$. Here we give a
sufficient condition to stabilize the estimator so that it is
well-defined and bounded for $t \geq 0$ and $r \geq 1$. Moreover, this
condition ensures that as $r$ approaches infinity, the estimator
\dengnohat{} approaches zero for fixed $t$. Note $\Rp(x)$ is
the real part of $x$. A proof of the next proposition is given in the supplement
(Section~S1.4).
\begin{prop}\label{prop:pacman}
If $\Rp(x_i) < 0$ for $1 \leq i \leq m$, then \dengnohat{} is bounded for any $t \geq 0$ and
$r \geq 1$.
Further, $\dengnohat{}\rightarrow 0$
as $r\rightarrow \infty$ for any $0\leq t < \infty$.
\end{prop}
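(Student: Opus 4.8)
The plan is to work entirely from the closed form supplied by Theorem~\ref{thm:exist}, namely $\dengnohat{} = \sum_{i=1}^m c_i\left(t/(t-x_i)\right)^r$, and to reduce both assertions to the single observation that the hypothesis $\Rp(x_i) < 0$ forces each factor $t/(t-x_i)$ to have modulus at most $1$ on $[0,\infty)$, with strict inequality at every finite $t$.

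First I would establish this modulus bound. Writing $\alpha = \Rp(x_i) < 0$ and letting $\beta$ denote the imaginary part of $x_i$, for $t \ge 0$ one has
\[
|t - x_i|^2 = (t - \alpha)^2 + \beta^2 \ge (t - \alpha)^2 = (t + |\alpha|)^2 > t^2,
\]
so that $|t/(t - x_i)| < 1$ for every finite $t \ge 0$ (the ratio being $0$ at $t = 0$ since $x_i \neq 0$). Note the bound cannot be made uniform: $|t/(t-x_i)| \to 1$ as $t \to \infty$, so the constant $1$ is exactly what this elementary estimate delivers.

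Boundedness then follows from the triangle inequality: for all $t \ge 0$ and $r \ge 1$,
\[
\bigl|\dengnohat{}\bigr| \le \sum_{i=1}^m |c_i|\,\left|\frac{t}{t - x_i}\right|^r \le \sum_{i=1}^m |c_i| =: C,
\]
where $C < \infty$ depends only on the partial-fraction coefficients determined by the initial sample, independently of $t$ and $r$. For the second assertion, fix $t$ with $0 \le t < \infty$; since $|t/(t - x_i)| < 1$ for each $i$, each summand $c_i\left(t/(t-x_i)\right)^r$ tends to $0$ as $r \to \infty$, and a finite sum of null sequences is null, so $\dengnohat{} \to 0$.

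There is no serious obstacle here; the only point needing care is the non-uniformity just noted --- the convergence in $r$ is genuinely pointwise in $t$ and degrades near $t = \infty$, but the proposition asks only for pointwise convergence on $[0,\infty)$, so nothing is lost. An alternative route to boundedness would combine continuity of $\dengnohat{}$ on each compact interval $[0, T]$ with the finite limit $\dengnohat{} \to \sum_{i=1}^m c_i$ as $t \to \infty$, but the direct estimate above is shorter and also yields the $r$-limit in one stroke.
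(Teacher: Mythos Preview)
Your proof is correct and is essentially the natural argument: the paper defers its proof to the supplement (Section~S1.4), but the key step there is exactly the modulus bound $|t/(t-x_i)| < 1$ for $t \ge 0$ under $\Rp(x_i) < 0$, from which both boundedness (via the triangle inequality) and the pointwise limit in $r$ follow immediately. Your remark on the non-uniformity as $t \to \infty$ is apt and matches the proposition's restriction to finite $t$ for the $r$-limit.
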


\noindent Remark. It is not unusual to constrain roots in such a way to ensure
stability. For example, the Hurwitz polynomials, which has all zeros
located in the left half-plane of the complex plane, are used as a
defining criterion for a system of differential equations to have
stable solutions.

\subsection{The construction algorithm}
\label{sec:algorithm}

\begin{algorithm}[h!]
\vspace{0.1pc}
 \begin{algorithmic}[1]
   \STATE Compute sums $S_i = \sum_{j \geq i} N_j$, for $i = 1,
   \ldots, 2m_\mathrm{max}$. These coefficients define
   $\phi(t)$.
   \STATE Compute the coefficients of the degree $2m_\mathrm{max}$ continued fraction approximation to $\phi(t)$ by applying the quotient-difference algorithm.
   \FOR {$m \gets m_\mathrm{max}$ \TO $1$}
   \STATE Obtain the Pad\'{e} approximant $P_{m-1}(t) / Q_m(t)$ by evaluating the
   $2m$-th convergent (truncation) of the continued fraction.
   \STATE Obtain the roots $x_i$, for $i = 1,\ldots,m$, of the denominator $Q_{m}(t)$.
   \STATE Calculate coefficients $c_i$ by partial fraction decomposition
   of $P_{m-1}(t) / Q_m(t)$.
 \IF {$\Rp(x_i) < 0$ for all $x_i$ and $\Psi_{1, m}(t)$ is increasing}
 \RETURN coefficients $(c_1, \ldots, c_m)$ and roots $(x_1, \ldots, x_m)$.
 \ENDIF
 \ENDFOR
 \end{algorithmic}
  \caption{Given a set of observed counts $\{N_j\}$, with $N_1, N_2 >0$, and a maximal
   value of $m_\mathrm{max}$, produce the stable and increasing estimator
   $\Psi_{r, m}(t)$ for maximal $m \leq m_\mathrm{max}$.}\label{alg:construction}
\end{algorithm}

Algorithm~\ref{alg:construction} provides a complete procedure for
constructing our estimator beginning with the observed counts $N_j$,
and satisfying the conditions outlined above. This procedure requires
specifying a maximum value of $m$, but also leaves room for using more
effective numerical procedures at each step.
Details about these procedures can be found in the
supplementary materials (Section~S3).

To see that Algorithm~\ref{alg:construction} terminates successfully,
note that when $m = 1$, 
\begin{equation}\label{estimator:m=1}
  \Psi_{r, 1}(t) = \frac{S_1^2}{S_2}\left( \frac{t}{t + (S_1 - S_2) / S_2} \right)^r.
\end{equation}
So if there exist at least one species represented once and one species
represented more than once in the initial sample, then we observe $S_1
- S_2 > 0$ and $S_2 > 0$.  This ensures $\Psi_{r,1}(t)$ satisfies
$\Rp(x_i) < 0$ and $\Psi_{r,1}(t)$ is increasing for every $r \geq
1$.

\subsection{Variance and confidence interval}

Deriving a closed-form expression for the variance of the estimator
\dengnohat{} is challenging. On one hand, when $m \geq 5$ we have no
general algebraic solution to the polynomial equations that identify
$x_i$ in \dengnohat{}, so a closed-form may not exist. On the other
hand, even for $m = 1$ the variance of $\Psi_{r,1}(t)$ involves a
nonlinear combination of random variables $S_1$ and $S_2$
(equation~\eqref{estimator:m=1}).

In practice we approximate the variance of our estimates by bootstrap
\citep{efron1994introduction}. Each bootstrap sample is a vector of
counts
\[
(N_1^*, N_2^*,\ldots, N_{j_\mathrm{max}}^*)
\]
that satisfies $\tsum_{i=1}^{j_\mathrm{max}} N_i^* = S_1,$ where
$j_\mathrm{max}$ is the largest observed frequency for a species in
the initial sample and $S_1$ is the number of species observed in the
initial sample. The $(N_1^*, N_2^*, \ldots, N_{j_\mathrm{max}}^*)$ is
sampled from a multinomial distribution with probability in proportion
to $(N_1, N_2, \ldots, N_{j_\mathrm{max}})$.  For each bootstrap, we
construct an estimator $\Psi_{r,m}^*(t)$ for the $r$-SAC.  All
estimators $\Psi_{r,m}^*(t)$ are then used to calculate the variance
of the estimator \dengnohat{}. Estimating confidence intervals as
percentiles of the bootstrap distribution requires too many samples
(e.g. \citet[Chapter~13]{efron1994introduction} suggest 1000) for
large-scale applications.  Instead we adopt the lognormal approach,
where the mean and variance can be accurately estimated using far
fewer bootstrap samples. Use of the lognormal is justified by an
observed natural skew for quartiles of estimates in our simulation
results (Figure~\ref{fig:ds_se_compare}a).

\section{Simulation studies}
\label{sec:sim}

We carried out a simulation study to assess the performance of the
estimator \dengnohat{}. The simulation scheme is partly inspired by
\cite{chao2004nonparametric} but involves populations and samples of
larger scale. Following our statistical assumptions, the number of
individuals for species $i$ in the initial sample follows a Poisson
distribution with the rate $\lambda_i$, for $i = 1, 2, \ldots, L$. The
rates $\lambda_i$ are generated from distributions we have chosen to
model populations with different degrees, types of heterogeneity and sample coverage.
We measure the {\it degree} of heterogeneity in a population by the
coefficient of variation (\cv{}) for
$\lambda_i$:
\begin{equation}\label{eqn:cv}
\bar{\lambda}^{-1}\Big((L-1)^{-1}\tsum_{i=1}^L(\lambda_i - \bar{\lambda})^2\Big)^{1/2},
~~\mbox{ where }~~
\bar{\lambda} = \tsum_{i=1}^L \lambda_i\slash L.
\end{equation}
The coefficient of variation quantifies difference in relative
abundances among species and is independent of sample sizes. For the
{\it type} of heterogeneity, we focus on the shapes of distributions,
for example distinguishing those with exponentially decreasing tail
versus heavy-tailed distributions. Sample coverage (\scov{}) is
defined as the total proportion of species in the population that are
covered in the sample. Sample coverage is one indicator for how well a
sample can represent the corresponding population: relatively little
can be inferred about those species not observed.

We selected six models for our simulations. The first is a homogenous
model, the Poisson distribution (P), included as a basis for
comparison with the other models. Intuitively, the homogeneous model
is the simplest one among all models. However, for a given sample
size, samples from the homogeneous population have the least coverage
among any type of population if the sample size is not too large
(See details and the proof in the supplementary materials).  The
second and third models are negative binomial (NB1 and NB2), where the
$\lambda_i$ follow gamma distributions. The NB models are widely used
to describe overdispersed counts data \citep{hilbe2011negative}.
The fourth model is a lognormal (LN) model \citep{bulmer1974fitting},
which has been applied in ecology \citep{preston1948commonness}.
Models 5 and 6 are a Zipf distribution (Z; \citealp{zipf1935psycho})
and a Zipf-Mandelbrot distribution (ZM;
\citealp{mandelbrot1977fractals}), respectively,
which are known as power law.
Models 4--6 represent so called heavy-tailed populations
\citep{newman2005powerlaw}.  Table~\ref{tab:simuparameter}
summarizes these parameter settings.

In our simulations we fixed the total number $L$ of species at 1
million (M) to represent large-scale applications.
For the results
below, the expected size of initial samples was also set to 1M
individuals. For each model, the values of parameters in each model
were determined in a way such that
$\tsum_{i=1}^L \lambda_i = L = 1 \text{M}.$
Our simulations covered $(t, r)$ representing the region
$[1,100]\times\{1,\ldots,100\}$, which more than covers the $(t, r)$
we have seen in practical applications. We measure
performance of estimators using relative error. For fixed $r$, relative
error is calculated as the $L^2$-distance between the expected
$\E[S_r(t)]$ and the estimate, divided by the $L^2$-norm of
$\E[S_r(t)]$, 
evaluated at
$t = 1, 2, \ldots, 100$.
The errors we report are means of relative error over the curves
for $r = 1, 2, \ldots, 100$.

We compared the estimator \dengnohat{} with several other estimators.
The zero-truncated Poisson (ZTP; \citealp{cohen1960estimating}) and
zero-truncated negative binomial (ZTNB; \citealp{sampford1955truncated})
are obvious and expected to perform well when the underlying
statistical assumptions of the estimator matches the model of the
simulation. The logseries (LS) approach, popularized in ecology, was
introduced as a special case of the ZTNB method when the shape
parameter in the negative binomial distribution was close to $0$
\citep{fisher1943relation}. To our knowledge, there is no
nonparametric estimator designed for $\E[S_r(t)]$ when $r > 1$. To
evaluate other plausible approaches, we made use of two nonparametric
estimators for SACs, specifically those due to
\cite{boneh1998estimating} and \cite{chao2004nonparametric}, which we
refer to as BBC and CS, respectively.  We leveraged
equation~\eqref{eqn:main} in Theorem~\ref{thm:formula} to derive
general estimators of $\E[S_r(t)]$, for $r\geq 1$, based on these two
estimators of $\E[S_1(t)]$. These derivations can be found in the
supplementary materials (Section~S2).

\begin{table}[t!]
\caption{Models used in simulations, with corresponding parameter
  settings, \cv{} and \scov{} values. For models Z and ZM, the \cv{} values
  are calculated using equation~\eqref{eqn:cv}. For other models,
  listed values for \cv{} are expectations directly calculated based
  on the underlying distributions and
  parameters. \scov{} is based on mean of sample coverage over 1000 samples.}
  \label{tab:simuparameter}
\begin{tabular*}{\columnwidth}{@{\extracolsep{\fill}}lllcc} \\ \hline
  Model & Name & Distribution on rates & $\cv{}$ & \scov{}\\
  \hline
 P & Homogeneous & $\lambda_i \propto 1$ & 0 & .632\\
 NB1 & Negative binomial & $\lambda_i \sim$ $\Gamma$$($shape=1, scale=1$)$ & $1$ & .750\\
 NB2 & Negative binomial & $\lambda_i \sim$ $\Gamma$$($shape=0.01, scale=1$)$ & $10$ & .991\\
 LN & Poisson-lognormal &  $\log(\lambda_i) \sim$ Gaussian$(0, 1)$ & $1.31$ & .742\\
 Z & Poisson-Zipf & $\lambda_i  \propto$ $1 / (i + 100)$ & $10.79$ & .810\\
 ZM & Poisson-Zipf-Mandelbrot & $\lambda_i   \propto$ $1 / (i + 100)^{1.1}$ & $15.10$ & .849\\
  \hline
\end{tabular*}
\end{table}

\subsection{Simulation results}

As can be seen from Figure~\ref{fig:ds_se_compare}a, the estimator
\dengnohat{} performs well under models NB1, NB2, LN, Z and ZM. We
consider these to represent heterogeneous populations due to their
large \cv{} compared with the homogeneous model
(Table~\ref{tab:simuparameter}). The relative errors are $0.002$ ($\pm
0.003$) and $0.027$ ($\pm 0.011$) for NB1 and NB2. The errors for the
Z and ZM models are slightly higher: $0.057$ ($\pm 0.042$) and $0.057$
($\pm 0.040$), respectively (Table~\ref{tab:simuResult}). Both
the relative error and the standard error of \dengnohat{} are much higher
when applied to the homogenous models
(Figure~\ref{fig:ds_se_compare}a).

We compared the estimator \dengnohat{} with the five other estimators.
The estimator \dengnohat{} has the least mean relative error compared
with other approaches under the LN, Z and ZM models (Figure~\ref{fig:ds_se_compare}b), 
which are the heavy-tailed models.  The
relative errors under these three models are $0.020$, $0.057$ and
$0.057$ (Table~\ref{tab:simuResult}). In particular, under the Z
and ZM models, the second most accurate approach, our generalization
of CS estimator, has relative error $0.525$ and $0.558$,
around $10\times$ the error of \dengnohat{}. The estimator
\dengnohat{} has higher standard error compared with the other methods
(Figure~\ref{fig:ds_se_compare}b), which we attribute broadly to
its use of procedures ({\it e.g.} to fit the Pad\'{e} approximant)
that can introduce numerical error. Even considering this variation,
when \dengnohat{} is at its least accurate it remains substantially
more accurate than the other methods across models LN, Z and ZM.  As
expected, for model NB1 and NB2, the ZTNB approach is the most
accurate because it matches the precise statistical assumptions of
those simulations. Importantly, without any assumption about the
latent distribution of $\lambda_i$, the estimator \dengnohat{} also
yields excellent accuracy in these two models, with relative errors
less than $5\%$. The LS approach performs similar to the ZTNB
approach when the shape parameter in the NB model is close to zero, as
occurs for NB2 (Figure~\ref{fig:ds_se_compare}). Similarly, for
the homogeneous population model the ZTP approach is the most
accurate.

\begin{figure}[t!]
\centerline{\includegraphics{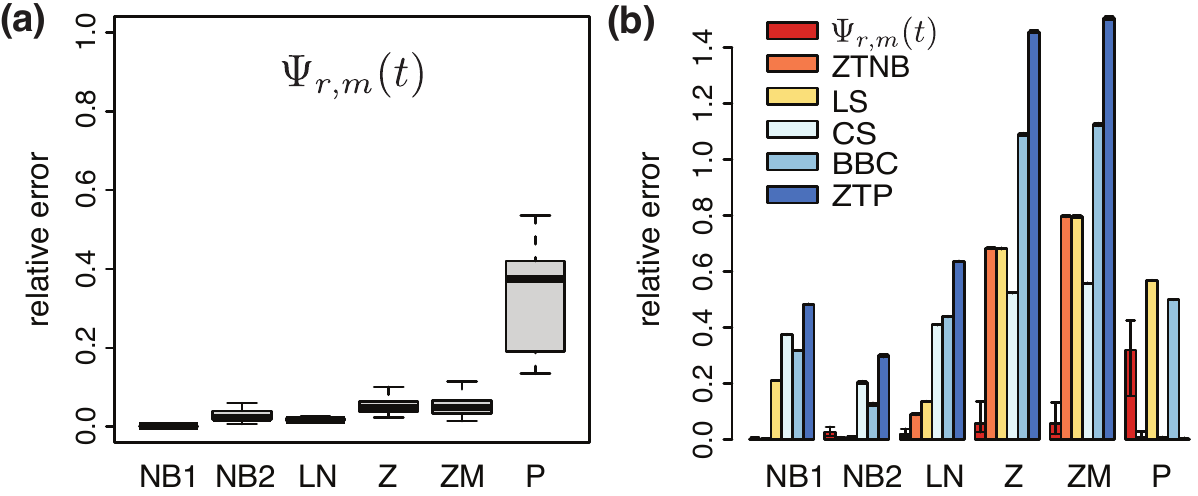}}
\caption{Relative errors in simulation studies.
 (a) relative error of the estimator \dengnohat{} for the
  six simulation models. Box plots are based on 1000 replicate
  simulations. The horizontal bar displays the median, boxes display
  quartiles and whiskers depict minima and maxima. (b) Mean relative
  error of all tested estimators for simulated datasets based on 1000 replicates
  for each model. The error bars show the 95\% confidence interval
  of relative errors.}
\label{fig:ds_se_compare}
\end{figure}

We found the estimator \dengnohat{} to be more accurate when the population
samples correspond to heavy-tailed distributions compared with other methods.
In general, these are the most challenging
scenarios for accurately predicting $\E[S_r(t)]$
(Figure~\ref{fig:ds_se_compare}b).
The NB2 and Z models have a similar degree of heterogeneity in terms of \cv{}
(Table~\ref{tab:simuparameter}),
but for all estimators except \dengnohat{}, relative error for Z is clearly
larger than the error for NB2.
This difference is associated with the change from exponentially
decreasing (NB2) compared with the power law distribution. For
\dengnohat{}, the relative error remains small in both these
scenarios. The above results correspond to an initial sample size of
$N=L$, but for initial samples of $0.5L$ to $2L$ the mean relative
error changed very little for the heterogeneous models (Figure~S1).
The error only noticeably increased when the
sample size was below $0.4L$.

Clearly our estimator has larger relative errors when the samples are
generated from a
homogeneous model compared with other models
(Figure~\ref{fig:ds_se_compare}). Our initial intuition was that the
homogeneous model should be easier to prediction because all
$\lambda_i$ are constrained by a single parameter. Our simulation
results show an interesting dichotomy in the performance of the
methods we tested. On one hand, nonparametric methods that do not
assume an underlying Poisson have higher relative error on the
homogeneous model. For example, the relative errors are 0.5 and 0.32
for BBC and our estimator (Table~\ref{tab:simuResult}). On the other
hand, relative error is 0.003 for CS, which is based on the Poisson
distribution. Parametric methods show similar trends. The ZTNB
performs well under the homogeneous model because it can easily
describe a Poisson when the shape parameter is large. Although the LS
estimator is derived from the negative binomial, it assumes the shape
parameter is close to 0, so it has difficulty describing homogeneous
data.

The sample coverage provides one perspective on why the homogeneous
model might present challenges for nonparametric approaches. In
particular, the homogeneous model has the lowest sample coverage
compared with other models having a fixed sample size (see Section~S6). 
Increasing the initial sample size can increase sample
coverage, which in turn improves the accuracy of our estimator. For
example, when we increase the size of the initial sample to 2M, the
relative error reduces to 0.123 ($\pm 0.06$).

\begin{table*}[t]
\caption{Relative error and standard error for the six simulation
  models. Numbers in parentheses are standard errors based on 1000 replicates.
} \label{tab:simuResult}
{\scriptsize
\begin{tabular*}{\columnwidth}{@{\extracolsep{\fill}}l*{6}r}
\\ \hline
& P & NB1 & NB2 & LN & Z & ZM\\ \hline
\dengnohat{} & .320 (.112) & .002 (.003)  & .027 (.011)  & .020 (.014)  & .057 (.042)  & .057 (.040) \\
ZTNB & .008 (.008) & .001 (.001)  & .004 (.002)  & .090 (.001)  & .683 (.001)  & .798 (.001) \\
LS & .569 (.000) & .211 (.000)  & .006 (.002)  & .137 (.000)  & .682 (.001)  & .797 (.001) \\
CS & .003 (.002) & .375 (.001)  & .204 (.002)  & .410 (.001)  & .525 (.001)  & .558 (.001) \\
BBC & .500 (.000) & .319 (.000)  & .126 (.003)  & .439 (.001)  & 1.090 (.002)  & 1.126 (.002) \\
ZTP & .002 (.002) & .484 (.000)  & .299 (.002)  & .637 (.001)  & 1.456 (.002)  & 1.505 (.003) \\
\hline
\end{tabular*}}
\end{table*}

\subsection{Best practice}

Based on simulations, we found that the estimator \dengnohat{} is accurate
when populations are heterogeneous
(Figure~\ref{fig:ds_se_compare}a). It suffers large relative
errors and variance when populations are close to being homogeneous, a
context where the ZTNB works well (Table~\ref{tab:simuResult}).
Our best-practice advice is to combine both our estimator \dengnohat{}
and the ZTNB. Whenever samples are generated from a heterogeneous
population, we should use the estimator \dengnohat{}; otherwise, we
switch to the ZTNB estimator to handle the homogeneous cases.
We use the coefficient of variation (\cv{}) to measure the degree of
heterogeneity in a population.  In practice, whenever the estimated
\cv{} is greater than $1$, we use our estimator \dengnohat{};
otherwise, we switch to the ZTNB estimator.  The procedure of
estimating the \cv{} and the rationality for using $\cv{} = 1$ as the
cutoff can be found in the supplementary materials
(Section~S5).
In our simulations all samples from the P model
have estimated \cv{}s less than 1 and all samples from the NB2, LN, Z and ZM model
have estimated \cv{}s greater than 1 (Figure~S2a).
Estimated \cv{}s are around 1 for the NB1 model, in which both \dengnohat{} and
the ZTNB approach give accurate estimates (Table~\ref{tab:simuResult}).



\section{Applications}
\label{sec:largescale}

We applied our estimator to data from three different domains:
linguistics, a social network, and a DNA sequencing application.  In
each case the data may be considered ``big''. We adopt a
strategy of sub-sampling from the full available data to generate a
ground truth reference for evaluation. We include the ZTNB for
comparison due to its popularity for overdispersed counts data \citep{hilbe2011negative}.
The estimated \cv{} for each dataset is in Table~S1.

\subsection{The vocabulary of Shakespeare and Dickens}

We first re-examined the Shakespearean vocabulary problem due to
\cite{efron1976estimating}. The data is 884,647 words written,
corresponding to a set of 31,534 distinct words.  There are 14,376
distinct words that appear exactly once in the collection, 4,343 that
appear exactly twice, and so on. The full word appearance frequencies
are listed in Table 3 by \cite{efron1976estimating}. Our task is
to predict the number of distinct words that would appear at least $r$
times if some additional quantity of Shakespeare's work is discovered.
For a special case $r = 1$, the problem has been discussed by
previous studies \citep{efron1976estimating}. Compared our prediction
results with previous studies, we found that the results are
surprisingly consistent.

The numbers $S_j$ of distinct words that appear in the collection at
least $j$ times, for $j = 1, 2, \ldots, 20$, are given in
Table~S2.  We applied
Algorithm~\ref{alg:construction} and obtained:
{\small
\[
\dengnohat{} = 120357.66\left(\frac{t}{t + 14.91}\right)^r + 24934.99\left(\frac{t}{t + 1.13}\right)^r + 13453.12\left(\frac{t}{t + 0.10}\right)^r.
\]}

The estimator $\Psi_{1, m}(t)$ predicts 42,993 ($\pm 586.17$)
distinct words when $t = 2$ ({\it i.e.} the unlikely event that ``the
other half'' of Shakespeare's were to be discovered). The additional
work is expected to contain 11,459 new distinct words.  The corresponding
prediction by Good and Toulmin's estimator is 11,430, and the
prediction by Fisher's negative binomial model is 11,483
\citep{efron1976estimating}. Prediction results of $\Psi_{1, m}(t)$
for $t = 4, 6, 11, 21$ are shown in Table~\ref{tab:shakespeareP}.  All
these estimates are consistent with the estimation by
\cite{efron1976estimating}.

\begin{table}[t]
\vspace{0.5pc}
\caption{Comparison of predictions by \dengnohat{} and Efron-Thisted estimator for $r = 1$.
  The expected number of distinct
  words when a total of $884647\times t$ words are discovered. The first and second
  columns are estimates and standard error by $\Psi_{1,m}(t)$. The fourth
  and fifth columns are lowerbound and upperbound estimated by
  Efron's estimator \citep[Table 5]{efron1976estimating}.}
\begin{tabular*}{\columnwidth}{@{\extracolsep{\fill}}l*{4}r} \label{tab:shakespeareP}
\\ \hline
  &  &  & \multicolumn{1}{c}{lower bound by} & \multicolumn{1}{c}{upper bound by}  \\
$t$ & \multicolumn{1}{c}{$\Psi_{1,m}(t)$} & SE & \multicolumn{1}{c}{Efron-Thisted estimator} & \multicolumn{1}{c}{Efron-Thisted estimator} \\ \hline
2 & 11,459 & 586 & 11,205 & 11,732 \\
4 & 26,494 & 1,171 & 23,828 & 29,411 \\
6 & 37,215 & 2,582 & 29,898 & 45,865 \\
11 & 55,501 & 7,675 & 34,640 & 86,600 \\
21 & 75,894 & 18,765 & 35,530 & 167,454 \\ \hline
\end{tabular*}
\end{table}

Among Shakespeare's known works, 17,158 words appear at least
twice. When $t = 2$ and $r= 2$, \dengnohat{} predicts that a total of
24,101 distinct words are expected to be observed at least twice. So
there are 6,943 new words observed at least twice when doubling
the amount of text. These new words could be either from
Shakespeare's known work that are observed exactly once, or from words
observed only in the additional work.  A total of 14,376 distinct words
appeared exactly once in Shakespeare's known work.  So at least
7,433 ($14376 - 6943$) distinct words that appear once in Shakespeare's
known work are likely to be absent from newly discovered work
of the same size.

We also applied the estimator \dengnohat{} to infer word frequencies
in a sample of Charles Dickens' work. We used data from Project
Gutenberg
as included with the R package
\prog{zipfR} (v0.6-6) \citep{zipfR2007}. This data set contains
roughly 2.8M written words,
of which just over
41k are distinct.
We sampled 300k words from the dataset as an initial sample and
applied \dengnohat{} for values of $r$ between 1 and 20.
Figure~\ref{fig:dickens} shows the estimated curves along with actual
curves from the entire Dickens data set. The estimated curves track
the true curves very closely. In contrast, ZTNB is
inaccurate for both $r = 1$ and $r > 1$. Table~S3 shows
estimated values and their standard errors (SE) for extrapolations of $5\times$ and
$9\times$, the latter is the maximum possible given the size of the
data set. Even at $r=20$ the relative error never exceeds $5\%$.
We also examined the behavior of \dengnohat{} as a function of $r$. As can
be seen from Figure~S3a, \dengnohat{} remains
accurate for large values of $r$. In comparison, ZTNB tends to
overestimate the observed values.

\begin{figure}[t]
\centerline{\includegraphics{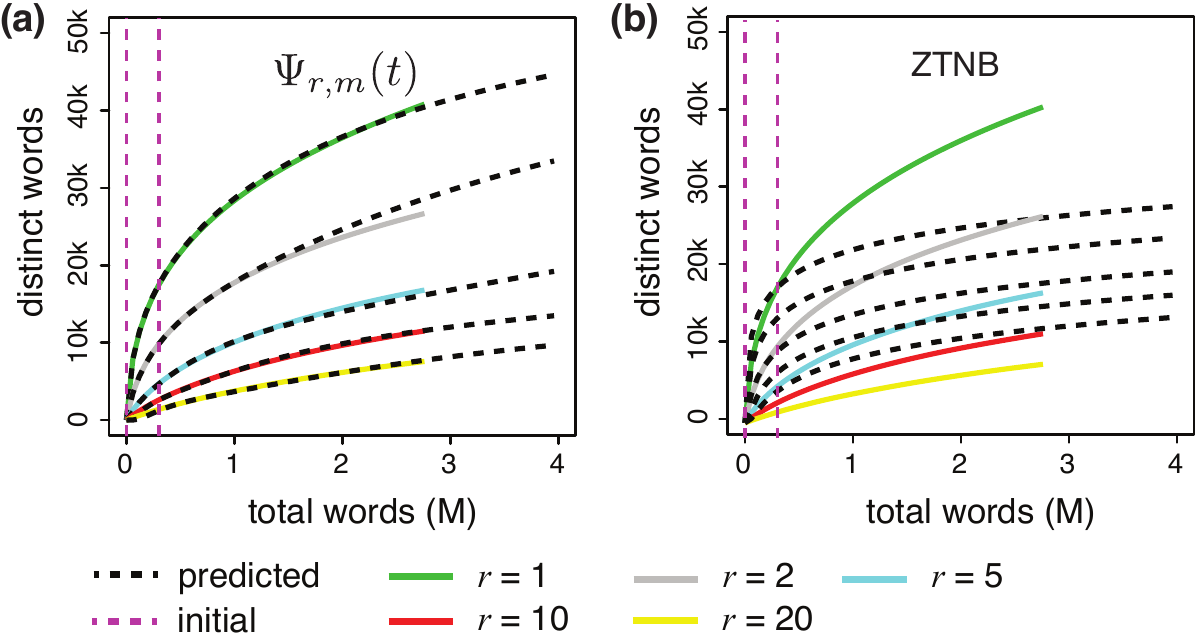}}
 \caption{distinct words represented at least $r$ times in
     samples of Dickens' work as a function of words written.
Estimates are based on an initial sample of 300k words (vertical
dashed lines), extrapolated to $13\times$ the initial sample size.
Expected values were obtained by subsampling the full data set
(about 9$\times$ the initial sample size) without
replacement. (a) Estimator \dengnohat{}. (b) ZTNB.}
 \label{fig:dickens}
\end{figure}

\subsection{Followers in a social network}

We also applied the estimator \dengnohat{} to predict the number of active Twitter
users, which have $r$ or more followers, based on a small sample of
``(follower, followed)'' relations.  We obtained a data set from the
Social Computing Data Repository \citep{Zafarani2009}. This data set
contains 11.3M users and over 85.3M following relationships, which
form edges in this social network. We randomly sampled 5M edges as an initial
sample and used this to estimate the number of users with at least $r$
followers in larger sets of following relationships.
Estimates using \dengnohat{} show high accuracy when we extrapolated to
$5\times$ the size of the initial sample, as can be seen in
Figure~\ref{fig:twitter}. For example, when the total number of
following relations is 25M, we should see roughly 1.2M
individuals with at least 2 followers (Table~S4). Our prediction of just around
1.3M is off by $3\%$.
Accuracy decreases for larger extrapolations. The entire dataset
contains 2.8M users, each of which has at least $r=2$ followers. Our
estimator predicts 3.1M, an overestimate of around $10\%$. Interestingly,
accuracy does not rapidly worsen with $r$
and seems to remain high for values of $r$ up to 100
(Figure~S3b), consistent with our results on the
linguistic data set.

For both the Dickens and the Twitter applications, the error from ZTNB
is substantially higher than from our estimator (Figure~\ref{fig:dickens}, 
\ref{fig:twitter}). At the same time, the estimates from ZTNB are less sensitive to $r$
than those of \dengnohat{} (Figure~S3).

\begin{figure}[t!]
\centerline{\includegraphics{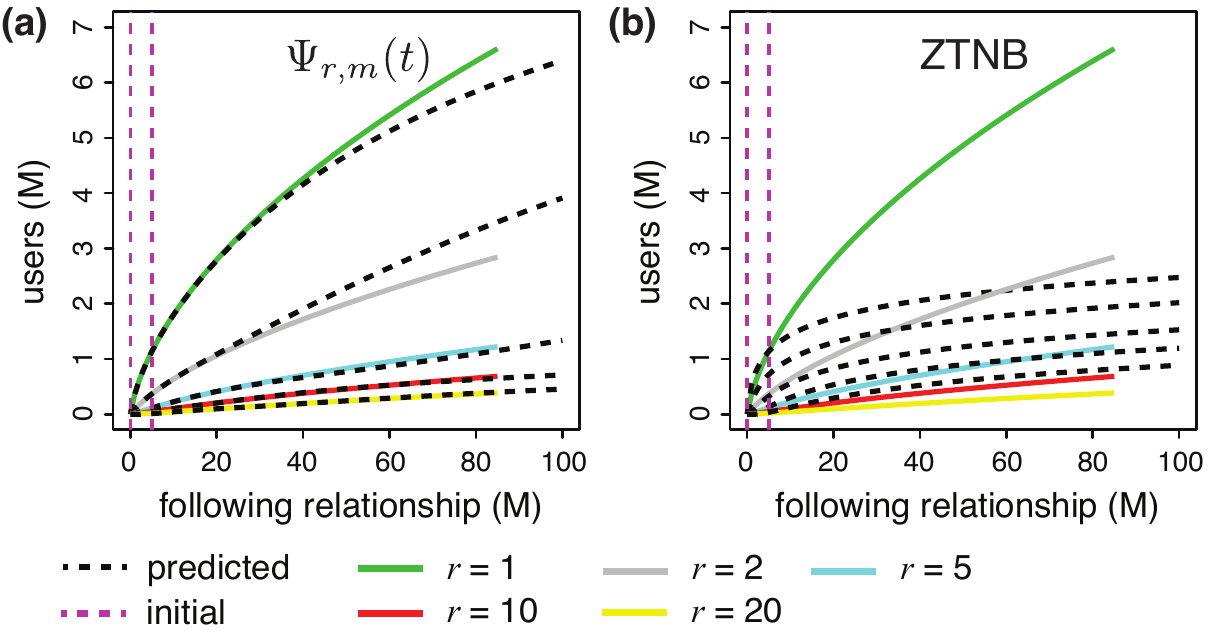}}
 \caption{Twitter users with at least $r$ followers in
     a sample of following relations.
   Estimates are based on an initial sample of 5M relations (vertical
   dashed lines), extrapolated to 20$\times$ initial sample
   size. Expected values were obtained by subsampling without
   replacement the full data set, which is about 17$\times$ the size
   of the initial sample. (a) The estimator \dengnohat{} and (b)
   ZTNB.}\label{fig:twitter}
\end{figure}

\subsection{Depth of coverage in DNA sequencing experiments}

To evaluate our approach on a larger scale, we applied our estimator to predict
the number of base pairs in the human genome that will be represented
at least $r$ times in a sequencing data set. In genomics terminology,
these are the positions in the genome covered by at least $r$
sequenced reads, or the positions with coverage depth at least $r$.
Coverage depth is critical in genetics studies, for example in
detecting SNPs, where candidate SNPs with low coverage depth are often
discarded. Knowing the distribution of coverage depth can help
researchers in experimental design, informing the total amount of
required sequencing in order to attain sufficient coverage over an
acceptable number of genomic sites \citep{zou2016quantifying}.

We downloaded four publicly available DNA sequencing experiments
(accession id SRX202787, \\
SRX205367, SRX204160 and SRX151616)
from NCBI to evaluate the performance of \dengnohat{}. Datasets were preprocessed
to obtain the number of genomic sites $N_j$ covered by
exactly $j$ reads (See Section~S7 for the details
of the preprocessing procedures).
We estimated the number of sites that would attain minimum
required depth as sequencing continues based on counts $N_j$.
Figure~\ref{fig:dna}a shows the curves for estimated values
\dengnohat{} for multiple values of $r$, along with the actual
expected values obtained by repeated subsampling from the full data
set.  Figure~\ref{fig:dna}b presents the same information for
estimates based on the ZTNB. This data set is sufficiently large to
reveal the inflection points in the curves when $r > 1$. Estimates
from \dengnohat{} closely track the true values. Even extrapolating up
to 100 times, the relative error is less than $5\%$ for various $r$
(Table~S5). The ZTNB, on the other hand,
overestimates $\E[S_1(t)]$ and then underestimates for other values of
$r$.
On the remaining three data sets, both methods show varying accuracy,
which is almost always higher for \dengnohat{}
(Figure~S4a -- S4f).

\begin{figure}[t!]
\centerline{\includegraphics{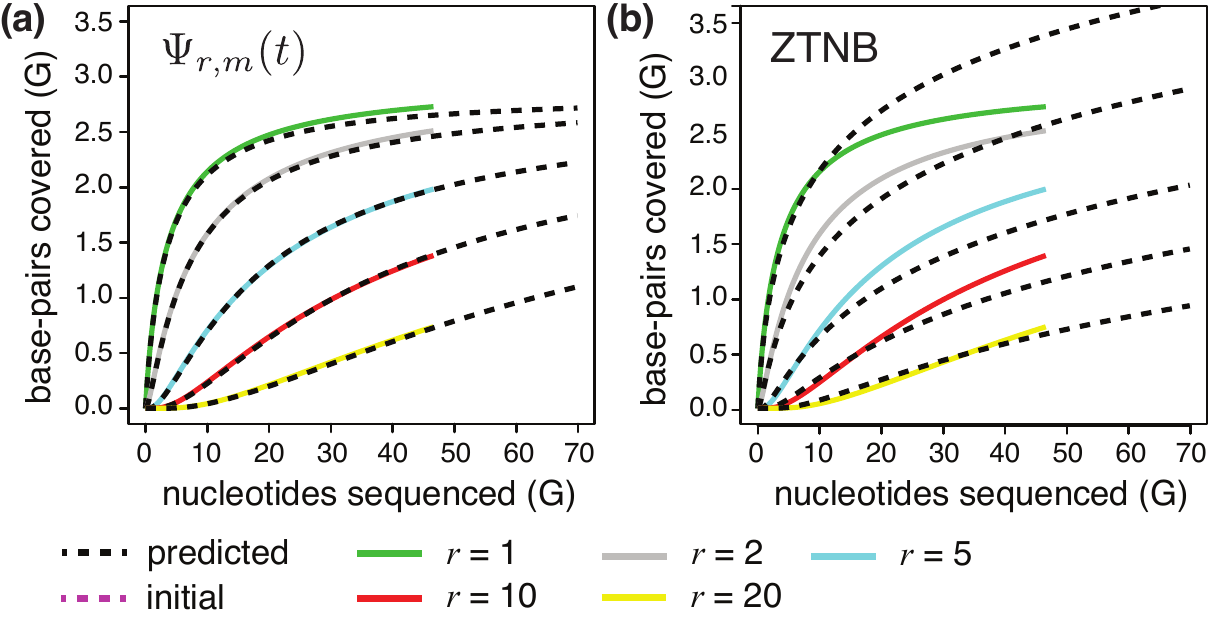}}
 \caption{Base pairs covered at least $r$ times in
     a DNA sequencing data set.
   Estimates are based on an initial sample of 5M reads (500M
   nucleotides), extrapolated to
   more than 160 $\times$ the initial sample size.  Expected values
   were from subsampling the full data set without replacement, which
   is around 107 $\times$ the size of the initial sample. Initial
   sample size is not indicated as the small size would not be
   visible.  Estimates made using (a) the estimator \dengnohat{} and
   (b) ZTNB.}\label{fig:dna}
\end{figure}

\section{Discussion}

We introduced a new approach to estimate the number of species that
will be represented at least $r$ times in a sample. The nonparametric
estimators obtained by our approach are universal in the sense that
they apply across values of $r$ for a given population. We have shown
that these estimators have favorable properties in theory, and also
give highly accurate estimates in practice. Accuracy remains high for
large values of $r$ and for long-range extrapolations. This approach
builds on the theoretical nonparametric empirical Bayes foundation
of \cite{good1956number}, providing a practical way to compute
estimates that are both accurate and stable.

The foundation for our approach is a relation between the
$r$-species accumulation curve $\E[S_r(t)]$ and the
$(r-1)^{\mathrm{th}}$ derivative of the average discovery rate
$\E[S_1(t)] / t$. This relation characterizes $\E[S_r(t)]$
directly, avoiding the summation of $\E[N_j(t)]$ estimates.
Clearly any estimator for either $\E[S_r(t)]$ or $\E[N_r(t)]$ provides
a means of estimating both quantities. By definition $S_r(t)$ is the
sum of $N_j(t)$ for $j\geq r$. Similarly $N_r(t)$ can be written as
$S_r(t) - S_{r+1}(t)$. We prefer to work with $S_r(t)$ because
$\E[S_r(t)]$ is an increasing function---an property that is
extremely useful for identifying problems during estimator
construction (Section~\ref{sec:algorithm}).

We use rational functions to approximate $r$-SAC. The advantages of
RFA stem from increased freedom to describe functions or to constrain
how those functions are estimated.  The coefficients of an
approximating rational function are usually determined in a way that
allows them to best fit observed data. The choice of forms for
rational functions, on the other hand, can be independent of the data
and can be determined by prior knowledge of the target function we
seek to approximate. In this work we use a class of rational functions
$P_{m-1}(t) / Q_m(t)$ with numerator degree $m-1$ and denominator
degree $m$ to mimic the behavior of the average discovery rate, which
is close to $L / t$ for large $t$. Other forms of
rational functions, in particular example $P_{m}(t) / Q_m(t)$ and
$P_{m+1}(t) / Q_m(t)$, were used when those forms made sense
\citep{daley2013predicting,deng2015}.

%

Our empirical accuracy evaluations were based on applications in which
the underlying data sets can be considered large compared to
traditional applications from ecology.  In particular, for modern
biological sequencing applications, samples are frequently in the
millions, and the scale of the data could be different by orders of
magnitude. These large-scale applications present new challenges to
traditional capture-recapture statistics, and call for methods which
can integrate high-order moments to accurately characterize the
underlying population. We generalized the classical study of
estimating a species accumulation curve and propose a nonparametric
estimator that can theoretically leverage any number of moments. We
believe both this generalization and the associated methodology
suggest possible avenues for practical advances in related estimation
problems.


\bibliographystyle{Chicago}
\bibliography{biblio}

\end{document}